\DeclarePairedDelimiter\ket{\lvert}{\rangle}
\DeclarePairedDelimiterX\braket[2]{\langle}{\rangle}{#1 \delimsize\vert #2}
\DeclarePairedDelimiterX\bbraket[2]{\langle\!\langle}{\rangle\!\rangle}{#1 \delimsize| #2}
\DeclarePairedDelimiterX\cbraket[2]{(\!(}{)\!)}{#1 \delimsize\| #2}
\DeclarePairedDelimiterX\ketbra[2]{\lvert}{\rvert}{#1 \delimsize\rangle\!\langle #2}
\DeclarePairedDelimiterX\kketbra[2]{|}{|}{#1 \delimsize\rangle\!\rangle\!\langle\!\langle #2}
\DeclarePairedDelimiterX\cketbra[2]{\|}{\|}{#1 \delimsize)\!)\!(\!( #2}
\DeclarePairedDelimiterX\inner[2]{\langle}{\rangle}{#1,#2}
\def\tr{{\rm tr}\,}
\theoremstyle{plain}
\newtheorem{theorem}{Theorem}
\newtheorem{prop}{Proposition}
\newtheorem{lemma}{Lemma}
\theoremstyle{definition}
\newtheorem{definition}{Definition}
\theoremstyle{remark}
\newcommand{\arxiv}[2][]{\ifthenelse{\isempty{#1}}{\href{http://arxiv.org/abs/#2}{{\tt arXiv:\allowbreak{}#2}}} {\href{http://arxiv.org/abs/#2}{{\tt arXiv:\allowbreak{}#2 [#1]}}}}
\newcommand{\booktitle}{\textsl}
\newcommand{\hrefdoi}[2]{\href{https://dx.doi.org/#1}{#2}}
\begin{document}
\title{Discrete Wigner Functions from Informationally Complete Quantum Measurements} \author{John B.\ DeBrota}
\author{Blake
  C.\ Stacey} \affil{\small
  \href{http://www.physics.umb.edu/Research/QBism}{QBism Group},
  Physics Department, University of Massachusetts Boston, \par 100
  Morrissey Boulevard, Boston MA 02125, USA}
\date{\today}

\maketitle

\begin{abstract}
    Wigner functions provide a way to do quantum physics using quasiprobabilities, that is, “probability” distributions that can go negative. Informationally complete POVMs, a much younger subject than phase space formulations of quantum mechanics, are less familiar but provide wholly probabilistic representations of quantum theory. In this paper, we show that the Born Rule links these two classes of structure and discuss the art of interconverting between them. In particular, we demonstrate
    that the operator bases corresponding to minimal discrete Wigner functions (Wigner bases) are orthogonalizations of minimal informationally complete measurements (MICs). By not imposing a particular discrete phase space structure at the outset, we push Wigner functions to their limits in a suitably quantified sense, revealing a new way in which the symmetric informationally complete measurements (SICs) are significant. Finally, we speculate that astute choices of MICs from the orthogonalization preimages of Wigner bases may in general give quantum measurements conceptually underlying the associated quasiprobability representations.
\end{abstract}

\section{Introduction}
In the practical course of doing physics, managing coordinate systems
is an important skill. It is helpful to know how to choose a basis
that makes a problem as simple as possible, and it is beneficial to
understand what can and cannot be eliminated by a clever choice of
reference frame. ``One good coordinate system may be worth more than a
hundred blue-in-the-face arguments,'' a colleague advises
us~\cite{Fuchs:2011}. To that end, this article will explore two
particular classes of normalized operator bases and the relations
between them.

We will work in the quantum theory of finite-dimensional systems
familiar from the study of quantum information and
computation~\cite{Nielsen:2010}. In this theory, each physical system
is associated with a Hilbert space $\mathcal{H}_d \simeq
\mathbb{C}^d$, where the dimension $d$ can be taken as a physical
characteristic of the system. For example, in a quantum computer
containing $N$ \emph{qubits,} the dimension is $d = 2^N$. A
\emph{quantum state,} the means of expressing the preparation of a
system, is an operator on this Hilbert space that is positive
semidefinite and has a trace of unity.  Measurements that one can
perform upon a system are represented mathematically as
\emph{positive-operator-valued measures} (POVMs), which are
resolutions of the identity into positive semidefinite operators:
\begin{equation}
\sum_{i=1}^n E_i = I.
\end{equation}
Each \emph{effect} $E_i$ corresponds to a possible outcome of the
measurement $E=\{E_i\}$, and the probability of that outcome is
calculated by the \emph{Born Rule}:
\begin{equation}
    p(E_i) = \tr (\rho E_i).
\end{equation}
That is, probabilities are Hilbert--Schmidt inner products between the
operators that stand for outcomes and for preparations (or
\emph{priors,} in more probabilistic language). If the set of effects
$\{E_i\}$ spans $\mathcal{L}(\mathcal{H}_d)$, the space of linear operators on $\mathcal{H}_d$, then any operator in this space can be expressed as a list of inner products. Such a
POVM is \emph{informationally complete,} since any quantum-mechanical
calculation about the system may be done in terms of these inner products. In order to
be informationally complete, a POVM must span $\mathcal{L}(\mathcal{H}_d)$, so it
must contain at least $d^2$ effects. An informationally complete POVM
with exactly $d^2$ effects is a \emph{minimal informationally
  complete} measurement, or MIC. 

Because a MIC is informationally complete, it can serve
as a \emph{reference measurement.} This means that any MIC has the
property that if an agent has written a probability distribution over
its $d^2$ possible outcomes, she can then compute the probabilities that
she should assign to the outcomes of any other measurement; in other words, a MIC measurement allows us to think of the Born Rule as furnishing a fully probabilistic representation of quantum theory. As we and collaborators have emphasized in the past, and will revisit below, proofs of the impossibility of probabilistic representations of quantum theory should actually be understood as proofs of the impossibility of probabilistic representations \emph{where the probabilities combine in a particular way inspired by
classicality}\cite{DeBrota:2018}. When using a MIC $E$ as a reference measurement, one must be mindful of its \emph{bias} $\{e_i\}$, composed of the \emph{weights} $e_i:=\tr E_i$ ; if the weights are equal, then $e_i=1/d$ and we call the MIC \emph{unbiased}. The bias of a MIC is an informational characteristic of the measurement --- it is equal to $d$ times the Born Rule probabilities for the MIC measurement given the quantum state of maximal indifference, the \emph{garbage
state}, $\rho=\frac{1}{d}I$. 

Much of the community's interest in MICs has focused upon the
fascinating special case of the \emph{symmetric informationally
  complete} measurements, the SICs~\cite{Zauner:1999, Renes:2004,
  Scott:2010a, Fuchs:2017a}. A SIC is an unbiased MIC where each
effect is proportional to a rank-1 projector --- so, each outcome of
the measurement is specified by a ray in the Hilbert space --- and the
inner products between any two effects are constant. This latter condition is captured by the Gram matrix, that is, the matrix of inner products $[G]_{ij}:=\tr E_iE_j$, of a SIC taking the particularly simple form
\begin{equation}
    [G_{\rm SIC}]_{ij}= \frac{1}{d^2} \frac{d\delta_{ij} + 1}{d+1}.
\end{equation}
SICs have proved in many ways optimal among MICs~\cite{Fuchs:2003,
  Scott:2006, Fuchs:2013, Appleby:2014b, Appleby:2015, DeBrota:2018}. SICs will once again occupy a privileged position from the perspective taken in this paper.

Projective measurements (i.e.,~those corresponding to projections onto the eigenspaces of quantum observables) are not informationally complete; probabilities corresponding to the outcomes of such measurements only provide a partial picture of one's expectations for all possible measurements. It seems the feeling that projective measurements are nevertheless the most conceptually significant variety combined with the development of quasiprobabilistic representations with powerful phase space
analogies produced in some the intuition that probability theory itself was insufficient or at least inconvenient to handle the oddities of quantum mechanics. The quantum foundations, information and computation communities have, accordingly, often looked for deviations from classical behavior in the peculiarities of quasiprobability distributions (e.g.,~the appearance of negativity)~\cite{Ferrie:2011,Veitch:2012,Veitch:2014,Zhu:2016}. Discrete Wigner function approaches in particular have received substantial theoretical and applied interest. Our purpose in this paper is to connect reference measurement based probabilistic representations with a suitably generalized notion of discrete Wigner function
representations. Our hope is that a deeper understanding of this association will be valuable to both probabilistic and quasiprobabilistic points of view and ultimately help us better understand what is possible in a quantum world.

Our plan for this paper is as follows. In
\S\ref{sec:prob}, we will use the MIC concept to introduce
probabilistic and quasiprobabilistic representations of quantum theory and identify the intuition that there should be a natural relation between them. In \S\ref{sec:wf} we consider a generalization of a MIC we call a measure basis and use it to define discrete minimal Wigner bases, setting the scene for a formal discussion of the intuitive relation noted earlier. \S\ref{sec:frame} is a brief interlude providing background and motivation from frame theory; in particular we introduce the frame operator and the
canonical tight frame. With the background in place, in
\S\ref{sec:pwf} we define the principal Wigner basis and derive some of its first properties, such as an induced equivalence class on the set of MICs. In \S\ref{sec:examples} we begin to apply the perspective gained in the previous section to MICs and Wigner bases appearing in the literature. Here, we will prove
Theorem~\ref{distanceToQreps}, which bounds the distance between an
unbiased MIC and an unbiased Wigner
basis. Theorem~\ref{sics-are-best-again} then captures the special
role that SICs play in these
considerations. In the broader picture of our research program, Theorem~\ref{sics-are-best-again} is the key result of the paper, for it demonstrates a new way in which SICs are extremal among MICs. Finally, in \S\ref{sec:discussion} we discuss what has been learned, record a few open questions, and anticipate a few fruitful directions for further research.

\section{Probability and Quasiprobability}
\label{sec:prob}
As we noted above, a MIC may serve as a reference measurement. We can illustrate the meaning of this by comparing and contrasting it
with classical particle mechanics. There, a ``reference measurement''
would just be an experiment that reads off the system's phase-space
coordinates, i.e., the positions and momenta of all the particles
making up the system.  Any other experiment, such as observing the
total kinetic energy, is in principle a coarse-graining of the
information that the reference measurement itself provides.

To develop the analogy, consider the following
scenario~\cite{Appleby:2017c}.  An agent Alice has a physical system
of interest, and she plans to carry out either one of two different,
mutually exclusive laboratory procedures upon it.  In the first
protocol, she will drop the system directly into a measuring apparatus
and thereby obtain an outcome.  In the second protocol, she will
cascade her measurements, sending the system through a reference
measurement and then, in the next stage, feeding it into the device
from the first protocol.  Probability theory in the abstract
\emph{provides no consistency conditions} between Alice's expectations
for these two protocols.  Different circumstances, different
probabilities!  Let $P$ denote her probability assignments for the
consequences of following the two-step procedure and $Q$ those for the
single-step protocol.  Then, writing $\{H_i\}$ for the possible
outcomes of the reference measurement and $\{D_j\}$ for those of the
other,
\begin{equation}
  P(D_j) = \sum_i P(H_i) P(D_j|H_i).
\end{equation}
This much is just logic, or more specifically speaking, a consequence
of Dutch-book coherence~\cite{Fuchs:2013, Diaconis:2017}.  It is known
as the Law of Total Probability (LTP).  \emph{However,} the claim that
\begin{equation}
  Q(D_j) = P(D_j)
\end{equation}
is an assertion of \emph{physics,} above and beyond probabilistic
self-consistency.  It codifies in probabilistic language the idea that
the classical ideal of a reference measurement simply reads off the
system's pre-existing phase-space coordinates, or data equivalent thereto.

In quantum physics, life is very different.  Instead of taking a
weighted average of the $\{P(D_j|H_i)\}$ as in the LTP, Alice instead
uses a mapping
\begin{equation}
  Q(D_j) = \mu\left(\{P(H_i)\}, \{P(D_j|H_i)\}\right),
\end{equation}
where the exact form of the function $\mu$ depends upon her choice of
MIC.  Conveniently, quantum theory is only so nonclassical that $\mu$
is a bilinear form, rather than a more convoluted function.

We can write our equations more compactly by introducing a vector
notation, in which omitted subscripts imply that an entire vector or
matrix is being treated as an entity.  Then the LTP has the expression
\begin{equation}
  P(D) = P(D|H)P(H)\;,
\end{equation}
while the quantum relation, the Born Rule, is\footnote{To obtain \eqref{ltpanalog}, expand an arbitrary state $\rho$ in the basis $\{\rho_i\}$, $\rho_i=H_i/h_i$, of quantum states proportional to the MIC basis:  $\rho=\sum_i\alpha_i\rho_i$. Then, computing the MIC probabilities for $\rho$ with the Born Rule, it follows that $\alpha=\Phi P(H)$. Thus, since  $P(D_j|H_i)=\tr D_j\rho_i$, \eqref{ltpanalog} follows from another application of the Born Rule with the $D$ measurement.} \begin{equation}\label{ltpanalog}
  Q(D) = P(D|H)\Phi P(H)\;, \hbox{ with }
  [\Phi^{-1}]_{ij} := \frac{1}{\tr H_j} \tr H_i H_j\;.
\end{equation}
The Born matrix $\Phi$ depends upon the MIC\footnote{And, in general, a set of post-measurement states; in this case the post-measurement states $\rho_i$ are proportional to the MIC itself and so do not constitute another dependency. We considered the more general update procedure extensively in \cite{DeBrota:2018}.}
, but it is always a column
quasistochastic matrix, meaning its columns sum to one but may contain
negative elements~\cite{DeBrota:2018}. In fact, $\Phi$ \emph{must}
contain negative entries; this follows from basic structural
properties of quantum theory~\cite{Ferrie:2011}.  As a consequence,
$\Phi P(H)$ is a quasiprobability. Considering the operation of $\Phi$
on $P(H)$ as a single term results in an equation algebraically
equivalent to the LTP aside from the appearance of negativity in the
last term. The same thing happens if we regard $\Phi$ as acting to the
left on $P(H|D)$, but now the negativity has been relegated to the
first term. For an unbiased MIC, $\Phi$ is a symmetric matrix and we can do the same thing in yet another way by
``splitting the map down the middle'' and acting with one factor in
either direction:
\begin{equation}\label{downthemiddle}
    Q(D)=\left(P(D|H)\Phi^{1/2}\right)\left(\Phi^{1/2}P(H)\right)\;,
\end{equation}
where the principal square root $\Phi^{1/2}$, which also turns out to be quasistochastic, provides a similar quasiprobabilistic interpretation to $\Phi^{1/2}P(H)$ and the rows of $P(D|H)\Phi^{1/2}$. One might then say that, for a
given MIC, there is a certain ``gauge freedom'' about where the
negativity can occur if we wish to massage \eqref{ltpanalog} into a
form which fits together in exactly the same way as the
LTP~\cite{Stacey:2018b}. As we will see later on, the even-handed gauge choice \eqref{downthemiddle} amounts to a Wigner function representation naturally associated with the reference measurement MIC.

\section{Discrete Minimal Wigner Functions}
\label{sec:wf}
We now begin a more formal discussion of quasiprobability representations
of quantum theory constructed on orthogonal operator
bases~\cite{Zhu:2016a, DeBrota:2017}, which were motivated in the
first place by the demand that we cast the Born Rule structurally analogous to the LTP. 
We first define the following generalization of a MIC:
\begin{definition}
    A \emph{measure basis} is a Hermitian basis $L$ for $\mathcal{L}(\mathcal{H}_d)$ for which 
\begin{equation}
    \sum_iL_i=I \quad {\rm and}\quad \tr L_i\geq 0\;.
\end{equation}

\end{definition}
\noindent The sum condition ensures that its inner products with a quantum state give a quasiprobability in the same way the Born Rule gives probabilities for a POVM. As with a MIC, the bias $\{l_i\}$ of a measure basis consists of the weights $l_i:=\tr L_i$. The condition that the weights are nonnegative permits us to carry over the probabilistic significance of a MIC's bias to a measure basis in general. It will also sometimes be useful to define the diagonal bias matrix
$[A]_{ij}:=l_i\delta_{ij}$ of a measure basis. We extend the Born matrix definition to any measure basis in the natural way: In terms of its Gram matrix and bias matrix, $\Phi=AG^{-1}$. Unless otherwise specified, a measure basis has a bias with weights denoted by the lowercase letter equivalent of the basis elements, e.g.\ the bias of $B$ is $\{b_i\}$. In this language, a MIC is a measure basis consisting of positive semidefinite operators.

A MIC cannot be
orthogonal\cite{DeBrota:2018b}, but a measure basis can --- this is the distinguishing property of the operator bases corresponding to discrete Wigner functions which
presently concerns us. Accordingly, we define
\begin{definition}
    A \emph{discrete minimal Wigner basis} is an orthogonal measure basis.
\end{definition}

\noindent We will generally omit the additional qualifiers ``discrete'' and ``minimal'' since we are only considering finite dimensions and we will not be
discussing quasiprobability representations of quantum theory that use
overcomplete operator bases.\footnote{For a generalization of Wigner functions which does use overcomplete bases, see~\cite{Zurel:2020}.} Up to proportionality, what we call a Wigner basis is the set of ``phase point operators'' for most Wigner function approaches in the literature. We prefer to use phase space agnostic terminology as we have intentionally avoided building any particular conception of a discrete phase space into our generalization. This move will hopefully allow us to
eventually understand precisely when, and to what extent, such conceptions are most useful. For the remainder of the paper, when we speak of a
Wigner function, we mean the quasiprobability distribution one may obtain from a Wigner basis and a density matrix. There is not much to say about Wigner bases at this level of generality, so for now we note two basic properties before proceeding. 

\begin{prop}\label{wfgram}
    The Gram matrix of a Wigner basis is equal to its bias matrix, $[G]_{ij}=f_i\delta_{ij}$. 
\end{prop}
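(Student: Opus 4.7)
The plan is to combine the two defining features of a Wigner basis: orthogonality in the Hilbert--Schmidt inner product and the resolution-of-identity condition $\sum_i F_i = I$. Orthogonality immediately gives $[G]_{ij} = \tr F_i F_j = 0$ whenever $i \neq j$, so the only substantive task is to identify the diagonal entries $\tr F_i^2$ with the weights $f_i = \tr F_i$.

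To pin down the diagonal, I would left-multiply the identity $\sum_j F_j = I$ by $F_i$ and take the trace, yielding
\begin{equation}
\sum_j \tr F_i F_j \;=\; \tr F_i \;=\; f_i.
\end{equation}
By the orthogonality of the basis, every term on the left-hand side except $j = i$ vanishes, so $\tr F_i^2 = f_i$. Combining this with the off-diagonal computation gives $[G]_{ij} = f_i \delta_{ij}$, which is exactly the bias matrix $A$ of the measure basis.

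There is no real obstacle here; both ingredients (orthogonality and the sum-to-identity condition) are baked into the definition of a Wigner basis, and the argument just uses them in the most direct way. The only thing worth noting is that the nonnegativity of the weights $f_i$ built into the measure basis definition is automatically consistent with this result, since $\tr F_i^2 \geq 0$ for any Hermitian operator, so the two defining constraints on a Wigner basis are not in tension.
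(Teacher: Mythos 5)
Your proof is correct and rests on the same two ingredients as the paper's: orthogonality kills the off-diagonal entries, and tracing $F_i$ against the resolution of the identity $\sum_j F_j = I$ fixes the diagonal, $\tr F_i^2 = \tr F_i = f_i$. The paper packages this second step through the dual basis (showing $F_i = c_i\widetilde{F}_i$ and then $\tr\widetilde{F}_i = 1$), whereas you apply the trace-against-identity trick directly to $F_i$; this is a minor streamlining of the same argument, not a different route.
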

\begin{proof}
    Let $\{F_i\}$ be a Wigner basis. As it is orthogonal, the Gram matrix is diagonal, that is, $\tr
F_iF_j = c_i\delta_{ij}$ for some constants $\{c_i\}$. The dual basis is the 
unique basis $\{\widetilde{F}_i\}$ such that $\tr\widetilde{F}_iF_j =
\delta_{ij}$, so we must have $F_i = c_i\widetilde{F}_i$. Then the sum constraint enforces
$\tr\widetilde{F}_i = 1$ and $c_i=f_i$ follows.
\end{proof}

\begin{definition}
    Let $F$ be a Wigner basis. The \emph{shifted Wigner basis} of $F$ is the set $F^{\rm S}$,
    \begin{equation}
        F^{\rm S}_i:=-F_i+\frac{2f_i}{d}I\;.
    \end{equation}
\end{definition}
\begin{prop}
    The shifted Wigner basis is a Wigner basis with the same bias.
\end{prop}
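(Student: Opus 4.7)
The plan is to verify, in order, each clause of the definitions of ``Wigner basis'' and of ``same bias'' for the set $\{F^{\rm S}_i\}$. None of the verifications should require more than a short calculation; the only mildly subtle point is confirming that $F^{\rm S}$ remains linearly independent, which I will handle by appealing to Proposition~\ref{wfgram} applied to $F^{\rm S}$ itself.

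First I would observe that Hermiticity of each $F^{\rm S}_i$ is immediate from the Hermiticity of $F_i$ and of $I$. For the sum condition, I would use the identity $\sum_i f_i = \tr\sum_i F_i = \tr I = d$, which follows because $F$ is itself a measure basis; then
\begin{equation}
    \sum_i F^{\rm S}_i = -\sum_i F_i + \frac{2}{d}\Bigl(\sum_i f_i\Bigr) I = -I + 2I = I.
\end{equation}
Taking the trace of the defining relation and using $\tr I = d$ gives $\tr F^{\rm S}_i = -f_i + 2 f_i = f_i$, which simultaneously shows that $F^{\rm S}$ has the same bias as $F$ and that its weights are nonnegative.

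Next I would verify orthogonality by expanding
\begin{equation}
    \tr F^{\rm S}_i F^{\rm S}_j = \tr F_i F_j - \frac{2 f_j}{d}\tr F_i - \frac{2 f_i}{d}\tr F_j + \frac{4 f_i f_j}{d^2}\tr I.
\end{equation}
Using Proposition~\ref{wfgram} for the first term and $\tr F_i = f_i$, $\tr I = d$ for the rest, the three non-diagonal contributions collapse to $-\tfrac{4 f_i f_j}{d} + \tfrac{4 f_i f_j}{d} = 0$, leaving $\tr F^{\rm S}_i F^{\rm S}_j = f_i \delta_{ij}$. Thus the Gram matrix of $F^{\rm S}$ is diagonal, which is exactly the orthogonality requirement.

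The only loose end is that ``basis'' must mean linearly independent. Because $F$ is a basis, Proposition~\ref{wfgram} applied to $F$ forces every $f_i$ to be strictly positive (a diagonal Gram matrix with a zero on the diagonal would signal a vanishing element, contradicting linear independence). The computation above then shows that the Gram matrix of $F^{\rm S}$ has the same strictly positive diagonal, so $F^{\rm S}$ is linearly independent and hence a basis of $\mathcal{L}(\mathcal{H}_d)$. Collecting these points, $F^{\rm S}$ satisfies every clause of the Wigner basis definition and has bias $\{f_i\}$. I do not anticipate any real obstacle; the main thing to be careful about is keeping track of the factor of $d$ arising from $\tr I$ when both the sum condition and the orthogonality check are performed.
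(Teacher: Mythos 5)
Your proof is correct and follows essentially the same route as the paper's: the same direct verification of the trace and sum conditions, and the identical expansion of $\tr F^{\rm S}_i F^{\rm S}_j$ into four terms that collapse to $f_i\delta_{ij}$. The only difference is that you explicitly confirm linear independence via the strictly positive diagonal Gram matrix (using Proposition~\ref{wfgram} to rule out $f_i=0$), a point the paper leaves implicit.
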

\begin{proof}
    $\tr F_i^{\rm S}=f_i$ and $\sum_iF_i^{\rm S}=I$ are obvious from inspection. Then
\begin{equation}
  \tr\left(-F_i+\frac{2f_i}{d}I\right)\!\left(-F_j+\frac{2f_j}{d}I\right)
  = \tr F_iF_j
  - \frac{2f_j}{d}\tr F_i
  - \frac{2f_i}{d}\tr F_j
  + \frac{4f_if_j}{d^2}\tr I
  = f_i\delta_{ij}\;
\end{equation}
demonstrates orthogonality.
\end{proof}

\section{The Frame Operator and the Canonical Tight Frame}
\label{sec:frame}

In addition to the Gram matrix, another important operator for classifying sets of vectors is the \emph{frame operator} which appears in the theory of frames. For an introduction to frame theory, we recommend the reference~\cite{Waldron:2018}. In the finite dimensional setting, a frame for a Hilbert space is a spanning set of vectors; a basis is a frame with the minimal number of vectors to be a spanning set. Specializing to $\mathcal{L}(\mathcal{H}_d)$, we define
\begin{definition}
    The \emph{frame operator} of a frame $L$ is the superoperator $S: \mathcal{L}(\mathcal{H}_d)\to\mathcal{L}(\mathcal{H}_d)$ defined by
    \begin{equation}
        X\mapsto \sum_i (\tr XL_i)L_i\;,\qquad \forall X\in \mathcal{L}(\mathcal{H}_d)\;.
    \end{equation}
\end{definition}

\noindent The frame operator is self-adjoint with respect to the Hilbert--Schmidt inner product and has the same nonzero spectrum as the Gram matrix; for a basis, such as a MIC, they are isospectral. By inspection of the definition, one sees that the frame operator allows us to construct and interconvert between the frame and the \emph{dual}, $\widetilde{L}_i := S^{-1}(L_i)$; for a basis, the dual frame is exactly the dual basis. It is not hard to show that the inverse of the frame operator is the frame operator of the dual. Likewise, for bases, the Gram matrix is also invertible and its inverse is given by the Gram matrix of the dual basis
\begin{equation}\label{Ginv}
    [G^{-1}]_{ij}=\tr \widetilde{L}_i\widetilde{L}_j\;.
\end{equation}
We will also make use of the fact that the inner products of a vector with a frame give the expansion coefficients for expressing the vector in the dual frame:
\begin{equation}\label{resid}
    X = \sum_i (\tr X \widetilde{L}_i)L_i=\sum_i (\tr X L_i)\widetilde{L}_i\;,\qquad \forall X\in\mathcal{L}(\mathcal{H}_d)\;.
\end{equation}

When the frame operator for a set of vectors is proportional to the identity, a frame is called \emph{tight}; and when the constant of proportionality is unity, a tight frame is called \emph{normalized}. For all frames there is a naturally associated normalized tight frame known as the \emph{canonical tight frame} which is ``halfway'' to the dual, obtained by applying the inverse square root of the frame operator to each vector. Intuitively, halfway between a basis and its dual is an orthonormal basis, that is, a self-dual basis. Indeed,
for bases, the canonical tight frame construction is an orthogonalization procedure corresponding to a symmetric version of the
Gram--Schmidt algorithm \cite{Waldron:2018}. The canonical tight frame is distinguished by being the \emph{closest} tight frame to the original frame in the sense of minimizing the least squares error (Theorem 3.2 in \cite{Waldron:2018}). 

A MIC cannot be a tight frame\footnote{This is a good place to clear up a confusion that we have encountered
a few times during conferences, when people from different subfields
try to communicate.  A MIC is a basis for the $d^2$-dimensional space
$\mathcal{L}(\mathcal{H}_d)$. It is not overcomplete, but exactly
complete, having just the right number of elements to span the
operator space while keeping itself a linearly-independent set.  A
rank-1 MIC, for which $E_i = e_i \ketbra{\psi_i}{\psi_i}$, is
specified by a set of weights $\{e_i\}$ and a set of vectors
$\{\ket{\psi_i}\}$.  These vectors are $d^2$ in number and live within
$\mathcal{H}_d$, so for \emph{that}
space, they would be overcomplete. The vectors $\{\sqrt{e_i}\ket{\psi_i}\}$ may be considered a frame for the
$d$-dimensional space $\mathcal{H}_d$, and, in fact, this is a normalized tight frame because the frame operator and POVM sum condition coincide: $S=\sum_ie_i\ketbra{\psi_i}{\psi_i}=I$. As an operator basis, however, a MIC is never a tight frame.} --- if its
frame operator were proportional to the identity, its Gram matrix
would have to be as well and a MIC cannot be an orthogonal
basis. For the same reason, an unbiased Wigner basis \emph{is} a tight frame. The canonical tight frame for a MIC $\{E_i\}$ is the orthonormal operator basis
$\{S^{-1/2}(E_i)\}$. This cannot be a measure basis because the sum condition fails for a normalized operator basis, but if the MIC is unbiased, $1/\sqrt{d}$ times the canonical tight frame is an unbiased Wigner basis. In what follows, we will extend this observation to MICs and Wigner bases of arbitrary bias and begin to explore the consequences. 

\section{The Principal Wigner Basis}
\label{sec:pwf}

In this section we define the principal Wigner basis which is the most significant association of a Wigner basis to a given MIC. We will see how the principal Wigner basis induces an equivalence class among measure bases and MICs more specifically. We then begin to study the structure of that class.

An arbitrary Wigner basis is not a tight frame; for bases this is because the tight frame concept effectively demands orthogonality \emph{and} that the vectors all have the same norm. From Proposition \ref{wfgram}, we see that the Hilbert--Schmidt norm of the Wigner basis elements are the square roots of the weights. So for any Wigner basis, if we divide each element by the square root of its weight, the result is a normalized tight frame. We would like to think of Wigner
bases playing the analogous role among measure bases that the canonical tight frame plays among frames. This motivates the following modification of the frame operator:

\begin{definition}\label{rescaledS}
    The \emph{rescaled frame operator} $\mathcal{S}_L$ of a measure basis $L$ is the frame operator of the \emph{rescaled basis} $\left\{\frac{1}{\sqrt{l_i}}L_i\right\}$. For $X\in\mathcal{L}(\mathcal{H}_d)$,
\begin{equation}
    \mathcal{S}_L(X)=\sum_j\left(\tr X\frac{L_j}{\sqrt{l_j}}\right)\frac{L_j}{\sqrt{l_j}}=\sum_j\left(\frac{\tr XL_j}{l_j}\right)L_j\;.
\end{equation}
\end{definition}
\noindent For all measure bases, $\mathcal{S}_L(I)=I$ and, consequently, $\tr \mathcal{S}_L(X)=\tr \mathcal{S}_L(X)I=\tr X\mathcal{S}_L(I)=\tr X$. In accordance with the motivation above, $\mathcal{S}_F$ is the identity superoperator for a Wigner basis $F$, which follows from $F_i=f_i\widetilde{F}_i$. For a MIC $E$, $\mathcal{S}_E$ is a trace-preserving quantum channel we called the entanglement breaking MIC channel (EBMC) in a previous paper; an EBMC for a rank-1 MIC is also the L\"uders MIC
channel for that MIC~\cite{DeBrota:2019}. It also turns out that the rescaled frame operator for a MIC has independently arisen in a different context
pertaining to quantum state tomography, suggesting a deeper significance~\cite{Zhu:2014}. When acting on one of the MIC elements we have
\begin{equation}
    \mathcal{S}_E(E_i)=\sum_j\left(\frac{\tr E_iE_j}{e_j}\right)E_j=\sum_j[\Phi^{-1}]_{ij}E_j\;,
\end{equation}
from which it follows that 
\begin{equation}
    \mathcal{S}^{-1}_E(E_i)=\sum_j[\Phi]_{ij}E_j\;.
\end{equation}
The action of the Born matrix on the reference probability vector is thus equivalent to the inverse rescaled frame operator on the effects. It is easy to see that $\mathcal{S}_E^{-1}(E_i)$ is the dual basis element to the state $\rho_i$, so the expression \eqref{ltpanalog} simply follows from inserting the identity into the Born Rule:
\begin{equation}
    Q(D_j)=\tr D_j\rho=\sum_i(\tr{D_j\rho_i})\left(\tr \mathcal{S}_E^{-1}(E_i)\rho\right)\;.
\end{equation}
This lets us understand the first type of quasiprobability grouping we discussed; $\tr\mathcal{S}_E^{-1}(E_i)\rho$ is the $i$th element of the quasiprobability vector $\Phi P(E)$. We are now in place to see that the even-handed gauge choice discussed above corresponds to a Wigner function.

\begin{definition}\label{pwb}
    The \emph{principal Wigner basis} of a measure basis $L$ is the set $PW(L)=\{F_i\}$,
    \begin{equation}
        F_i:=\mathcal{S}_L^{-1/2}(L_i)\;.
    \end{equation}
\end{definition}
\begin{prop}\label{pwisWF}
    For a measure basis $L$, $PW(L)$ is a Wigner basis with the same bias as $L$.
\end{prop}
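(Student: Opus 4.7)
The plan is to verify, one by one, the defining properties of a Wigner basis for $\{F_i\}$: that it is a Hermitian basis, that $\sum_i F_i = I$, that $\tr F_i \geq 0$ (in fact $= l_i$), and that it is orthogonal. The setup step is to check that $\mathcal{S}_L$ is self-adjoint with respect to the Hilbert--Schmidt inner product (a short computation using $L_j^* = L_j$) and positive definite (because $\langle X, \mathcal{S}_L(X)\rangle = \sum_j |\tr X L_j|^2/l_j$ vanishes only when $X=0$, since $L$ is a basis). Then $\mathcal{S}_L^{-1/2}$ is well defined by the functional calculus, is itself self-adjoint, and, because $\mathcal{S}_L$ preserves Hermiticity, so does $\mathcal{S}_L^{-1/2}$. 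Since $\mathcal{S}_L^{-1/2}$ is invertible, $\{F_i\}$ is automatically a Hermitian basis for $\mathcal{L}(\mathcal{H}_d)$.

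For the sum and bias, I would lean on the identity $\mathcal{S}_L(I) = I$ already noted after Definition~\ref{rescaledS}, which by functional calculus upgrades to $\mathcal{S}_L^{-1/2}(I) = I$. This immediately gives
\begin{equation*}
\sum_i F_i = \mathcal{S}_L^{-1/2}\!\Bigl(\sum_i L_i\Bigr) = \mathcal{S}_L^{-1/2}(I) = I,
\end{equation*}
and, using self-adjointness,
\begin{equation*}
\tr F_i = \langle I, F_i\rangle = \langle \mathcal{S}_L^{-1/2}(I), L_i\rangle = \langle I, L_i\rangle = l_i,
\end{equation*}
so the bias of $PW(L)$ matches that of $L$ and is in particular nonnegative.

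The orthogonality calculation is the substantive step. Self-adjointness once more gives
\begin{equation*}
\tr F_i F_j = \langle \mathcal{S}_L^{-1/2}(L_i), \mathcal{S}_L^{-1/2}(L_j)\rangle = \langle L_i, \mathcal{S}_L^{-1}(L_j)\rangle,
\end{equation*}
so it suffices to identify $\mathcal{S}_L^{-1}(L_j)$. Substituting $\widetilde{L}_j$ into Definition~\ref{rescaledS} and using $\tr \widetilde{L}_j L_k = \delta_{jk}$ yields $\mathcal{S}_L(\widetilde{L}_j) = L_j/l_j$, i.e., $\mathcal{S}_L^{-1}(L_j) = l_j \widetilde{L}_j$. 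Hence $\tr F_i F_j = l_j \tr L_i \widetilde{L}_j = l_j \delta_{ij}$, which is the desired orthogonality, and the diagonal is the bias in accord with Proposition~\ref{wfgram}.

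The only real obstacle is spotting the identity $\mathcal{S}_L^{-1}(L_j) = l_j \widetilde{L}_j$, the rescaled-basis counterpart of the general frame identity $S^{-1}(L_i) = \widetilde{L}_i$; once it is in hand, every step reduces to routine manipulation of the Hilbert--Schmidt inner product and the functional calculus applied to $\mathcal{S}_L$.
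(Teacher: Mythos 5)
Your proof is correct and follows essentially the same route as the paper's: the sum and bias come from $\mathcal{S}_L^{-1/2}(I)=I$ and trace preservation, and orthogonality reduces via self-adjointness to $\tr L_i\,\mathcal{S}_L^{-1}(L_j)$, where your identity $\mathcal{S}_L^{-1}(L_j)=l_j\widetilde{L}_j$ is exactly the basis--dual relation the paper invokes for the rescaled basis. Your additional verification that $\mathcal{S}_L$ is self-adjoint and positive definite (so that $\mathcal{S}_L^{-1/2}$ exists and $\{F_i\}$ is indeed a Hermitian basis) is a welcome bit of care that the paper leaves implicit.
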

\begin{proof}
    As the identity is an eigenvector with eigenvalue $1$ of $\mathcal{S}_L$, this will also be true of $\mathcal{S}^{-1/2}_L$, so
    \begin{equation}
        \sum_iF_i=\sum_i\mathcal{S}_L^{-1/2}(L_i)=\mathcal{S}_L^{-1/2}(I)=I\;.
    \end{equation}
    The bias is preserved because $\mathcal{S}^{-1/2}_L$ is trace-preserving. Orthogonality follows from the self-adjointness of the rescaled frame operator and the relation between a basis and its dual:
    \begin{equation}
        \tr F_iF_j=\tr\!\left(\mathcal{S}_L^{-1/2}(L_i)\mathcal{S}_L^{-1/2}(L_j)\right)=\tr L_i\mathcal{S}_L^{-1}(L_j)=\sqrt{l_il_j}\tr\!\left( \frac{L_i}{\sqrt{l_i}}\mathcal{S}_L^{-1}\!\left(\frac{L_j}{\sqrt{l_j}}\right)\right)=l_i\delta_{ij}\;.
    \end{equation}
\end{proof}
\begin{prop}\label{MICbasis}
    Let $L$ be a measure basis with Gram matrix $G$ and bias matrix $A$. The square root of the Born matrix with all positive eigenvalues,
    \begin{equation}
    \sqrt{\Phi} := A^{1/2}
  \left(A^{1/2}G^{-1}A^{1/2}\right)^{1/2} A^{-1/2}\;,
    \end{equation}
    gives the expansion coefficients of the principal Wigner basis in the measure basis:
    \begin{equation}
        F_i=\sum_j[\sqrt{\Phi}]_{ij}L_j\;.
    \end{equation}
\end{prop}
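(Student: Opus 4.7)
My plan is to diagonalize $\mathcal{S}_L^{-1/2}$ by first moving to the ``symmetrized'' rescaled basis $K_i := L_i/\sqrt{l_i}$, where $\mathcal{S}_L$ is represented by a manifestly symmetric positive definite matrix; I then extract the square root via ordinary matrix functional calculus and undo the rescaling to express the answer back in the measure basis $L$.

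First, I would compute the matrix of $\mathcal{S}_L$ in the $K$-basis directly from Definition~\ref{rescaledS}, obtaining $\mathcal{S}_L(K_i) = \sum_j \tilde{G}_{ij}\,K_j$ with $\tilde{G} := A^{-1/2}GA^{-1/2}$. Since $G$ is the Gram matrix of a basis it is symmetric positive definite, and hence so is $\tilde{G}$. Therefore $\mathcal{S}_L^{-1/2}$ acts in the $K$-basis by the unique positive square root of $\tilde{G}^{-1} = A^{1/2}G^{-1}A^{1/2}$, giving $\mathcal{S}_L^{-1/2}(K_i) = \sum_j [(A^{1/2}G^{-1}A^{1/2})^{1/2}]_{ij}\,K_j$.

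Next, I would translate back to the measure basis. Using $L_i = \sqrt{l_i}\,K_i$ and linearity,
\begin{equation}
F_i = \mathcal{S}_L^{-1/2}(L_i) = \sqrt{l_i}\,\mathcal{S}_L^{-1/2}(K_i) = \sum_j \left[A^{1/2}(A^{1/2}G^{-1}A^{1/2})^{1/2}A^{-1/2}\right]_{ij} L_j\;,
\end{equation}
where the diagonal factors of $A^{\pm 1/2}$ absorb the $\sqrt{l_i}/\sqrt{l_j}$ that appears when re-expressing $K_j$ in terms of $L_j$. To confirm that this coefficient matrix deserves the name $\sqrt{\Phi}$, I would square it: the inner $A^{-1/2}A^{1/2}$ cancels and what remains is $A^{1/2}(A^{1/2}G^{-1}A^{1/2})A^{-1/2} = AG^{-1} = \Phi$. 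That same similarity exhibits $\Phi$ as conjugate to a symmetric positive definite matrix, so $\Phi$ has strictly positive eigenvalues and the constructed square root is the unique one sharing this property.

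The only real subtlety is interpreting ``the'' square root: in the $L$-basis the matrix $\Phi = AG^{-1}$ is not symmetric even though $\mathcal{S}_L^{-1}$ is self-adjoint with respect to the Hilbert--Schmidt inner product, so functional calculus cannot be applied directly to $\Phi$ without first choosing a branch. The detour through the $K$-basis is precisely what fixes that branch: it produces the positive square root of a symmetric matrix, which conjugates back to the formula in the proposition and supplies both existence and the positive-eigenvalue uniqueness in one stroke.
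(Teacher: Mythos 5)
Your proof is correct and takes essentially the same route as the paper's: both pass to the rescaled basis $\{L_i/\sqrt{l_i}\}$, in which $\mathcal{S}_L^{-1}$ is represented by the symmetric positive definite matrix $A^{1/2}G^{-1}A^{1/2}$, take the positive matrix square root there, and conjugate back by $A^{\pm 1/2}$ to return to the $L$-basis. Your explicit check that the resulting coefficient matrix squares to $\Phi = AG^{-1}$ and your remark on which branch of the square root is selected make the argument slightly more self-contained, but the underlying idea is identical.
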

\begin{proof}
    Let $D$ be the rescaled basis of $L$. Since $F_i=\mathcal{S}_L^{-1/2}(L_i)=\sum_j(\tr \mathcal{S}_L^{-1/2}(L_i)\widetilde{L}_j)L_j$, we must show $[\sqrt{\Phi}]_{ij}=\tr \mathcal{S}_L^{-1/2}(L_i)\widetilde{L}_j$. Note that
    \begin{equation}
        [A^{1/2}G^{-1}A^{1/2}]_{ij}=\tr \widetilde{D}_i\widetilde{D}_j=\tr\mathcal{S}_L^{-1}(D_i)\widetilde{D}_j\implies \left[\left(A^{1/2}G^{-1}A^{1/2}\right)^{1/2}\right]_{ij}=\tr\mathcal{S}_L^{-1/2}(D_i)\widetilde{D}_j\;.
    \end{equation}
    Thus 
    \begin{equation}
        \begin{split}
            [\sqrt{\Phi}]_{ij}&= \left[A^{1/2}
        \left(A^{1/2}G^{-1}A^{1/2}\right)^{1/2} A^{-1/2}\right]_{ij}=\sqrt{\frac{l_i}{l_j}}\left[\left(A^{1/2}G^{-1}A^{1/2}\right)^{1/2}\right]_{ij}\\
            &=\sqrt{\frac{l_i}{l_j}}\tr\mathcal{S}_L^{-1/2}(D_i)\widetilde{D}_j=\tr\mathcal{S}_L^{-1/2}(L_i)\widetilde{L}_j\;,
        \end{split}
    \end{equation}
    where we used the fact that $\widetilde{D}_j=\sqrt{l_j}\widetilde{L}_j$\;.
\end{proof}
Applied to MICs, Proposition \ref{pwisWF} demonstrates that the principal Wigner basis is in fact a Wigner basis and Proposition \ref{MICbasis} connects it to the probabilistic representations discussed in \S\ref{sec:prob}; given any MIC reference measurement, there is an associated Wigner function formed by dividing the effect of the Born matrix equally among the probabilities and quasiprobabilities in \eqref{ltpanalog}.

The principal Wigner basis map satisfies several of the properties one would desire for a ``principal'' definition. Global unitary conjugation acts equally on a MIC and its principal Wigner basis; for any unitary $U$, if $\{F_i\}$ is the principal Wigner basis of the MIC $\{E_i\}$, $\{UF_iU^\dag\}$ is the principal Wigner basis of the MIC $\{UE_iU^\dag\}$. This follows from the invariance of the Gram matrix under a global unitary conjugation of the MIC. As a consequence, if a MIC is \emph{group
covariant}, i.e., if it can be produced by taking the orbit of an initial element under the action of a group, thereby making $d^2$ elements out of one, its principal Wigner basis is group covariant with respect to the same group. The principal Wigner basis also respects tensor products in the following way. Let $D$ and $E$ be MICs, not necessarily for the same dimensional Hilbert space. The elementwise tensor products of their elements, $\{D_i\otimes E_j\}$, forms a MIC $D\otimes E$ for the
product dimension. The principal Wigner basis for this MIC is equal to the tensor product of the principal Wigner bases for the constituent MICs, that is, $PW(D\otimes E)=PW(D)\otimes PW(E)$. This follows from the observation that the tensor product of the Born matrices for the constituent MICs is equal to the Born matrix of the tensor product MIC.

Definition \ref{pwb} furnishes a map from any MIC to a particular Wigner basis. But which MICs are in the preimage of a particular Wigner basis? As one might expect from an orthogonalization
procedure, there are infinitely many MICs which share a principal Wigner basis. We group those that do into an equivalence class. 
\begin{definition}
If $PW(L)=PW(M)$ for two distinct measure bases $L$ and $M$, we say $L$ and $M$ are \textit{Wigner equivalent}, denoted $L\sim_{\rm W}M$.
\end{definition}
\noindent What can we say about the Wigner equivalence classes? The first thing to realize is that the effects of a MIC of a given bias can have different norms depending on the purity of $\rho_i$ while, by contrast, for a Wigner basis, the norm of the elements is fixed by the bias. This inspires us to guess that MICs which are ``in the same direction'' will be Wigner equivalent. The following few results will make this intuition precise.
\begin{prop}
    Given a measure basis $L$ and a real parameter $t\neq 0$, the set $L^t=\{L^t_i\}$,
    \begin{equation}\label{collinearMB}
        L^t_i:=t L_i+(1-t)\frac{l_i}{d}I
    \end{equation}
    is a measure basis with the same bias as $L$. \end{prop}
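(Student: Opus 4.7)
The plan is to verify the four conditions defining a measure basis (Hermiticity, sum to $I$, nonnegative weights, and the basis/spanning property) together with the claim about the bias, in that order. Three of the four checks are direct computations; the only one requiring any genuine argument is the basis property, and even that will reduce to a short linear-algebra observation. A useful preliminary identity that I will establish once and reuse is that, for \emph{any} measure basis, $\sum_i l_i = \tr \sum_i L_i = \tr I = d$.

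First I would note that $L^t_i$ is Hermitian because it is a real linear combination of the Hermitian operators $L_i$ and $I$. Next, for the bias, I compute
\begin{equation}
    \tr L^t_i = t\,\tr L_i + (1-t)\frac{l_i}{d}\tr I = t l_i + (1-t) l_i = l_i,
\end{equation}
which simultaneously shows that the bias is preserved and, since $l_i\geq 0$ by hypothesis on $L$, that the nonnegativity condition on the weights is inherited. For the sum condition, I use $\sum_i L_i = I$ together with the identity $\sum_i l_i = d$ noted above:
\begin{equation}
    \sum_i L^t_i = t\sum_i L_i + \frac{1-t}{d}\Bigl(\sum_i l_i\Bigr)I = tI + (1-t)I = I.
\end{equation}

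The only step with any content is verifying that $L^t$ remains a basis for $\mathcal{L}(\mathcal{H}_d)$. Since $L^t$ still has $d^2$ elements and lives in a $d^2$-dimensional space, it suffices to check that the change-of-basis matrix $M$ defined by $L^t_i = \sum_j M_{ij}L_j$ is invertible; using $I=\sum_j L_j$, its entries read $M_{ij} = t\delta_{ij} + \frac{(1-t)l_i}{d}$. Thus $M = tI + \frac{1-t}{d}\,\ell \mathbf{1}^T$ is a rank-one perturbation of $tI$, where $\ell_i=l_i$. The matrix determinant lemma gives
\begin{equation}
    \det M = t^{d^2}\Bigl(1 + \frac{1-t}{td}\mathbf{1}^T\ell\Bigr) = t^{d^2}\Bigl(1+\frac{1-t}{t}\Bigr) = t^{d^2-1},
\end{equation}
which is nonzero precisely when $t\neq 0$, as assumed.

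The main obstacle, if any, is just selecting the cleanest route to linear independence, and the rank-one-update determinant above is perhaps the most transparent; an equivalent argument inverts the defining relation as $L_i = \tfrac{1}{t}\bigl(L^t_i - \tfrac{(1-t)l_i}{d}I\bigr)$ and notes that $I = \sum_j L^t_j \in \mathrm{span}(L^t)$, so that $\mathrm{span}(L^t) \supseteq \mathrm{span}(L) = \mathcal{L}(\mathcal{H}_d)$. Either way, with all four conditions verified and the bias computation already in hand, the proof is complete.
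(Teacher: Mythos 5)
Your proof is correct and covers everything the paper's proof does (plus Hermiticity, which the paper leaves implicit); the trace and sum computations match the steps the paper dismisses as obvious. Where you genuinely diverge is the linear-independence step, the only part with content. The paper argues by contradiction: it assumes a nontrivial dependence $\sum_i\beta_iL^t_i=0$, rewrites it as an expansion of $0$ in the basis $L$, invokes uniqueness of expansion coefficients to force all $\beta_i$ equal to a common constant $\beta$, and then uses $\sum_iL^t_i=I$ to force $\beta=0$. You instead write down the change-of-basis matrix $M=tI+\frac{1-t}{d}\,\ell\mathbf{1}^T$ and evaluate its determinant by the matrix determinant lemma, getting $\det M=t^{d^2-1}$. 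Both are elementary, but your route buys something extra: the explicit determinant makes it transparent that invertibility fails \emph{only} at $t=0$ and quantifies how the map degenerates there, whereas the paper's contradiction argument only certifies independence for each fixed $t\neq 0$. Your second variant --- inverting the defining relation and noting $I\in\mathrm{span}(L^t)$, so $\mathrm{span}(L^t)\supseteq\mathrm{span}(L)$ --- is arguably the cleanest of the three and is really the paper's argument run backwards without the contradiction scaffolding. No gaps; the auxiliary identity $\sum_il_i=d$ is correctly established and correctly used.
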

\begin{proof}
$\sum_iL^{t}_i=I$ and $\tr L^t_i=l_i$ are obvious. What remains is to prove linear independence.    
    Suppose there is a set $\{\beta_i\}$ not all zero for which $\sum_i\beta_iL^t_i=0$. Then
    \begin{equation}
        \sum_i\beta_iL^t_i=t\sum_i\beta_iL_i+\frac{(1-t)}{d}\left(\sum_i\beta_il_i\right)I=0\implies\sum_i\beta_iL_i=\sum_i\left(\frac{t-1}{td}\sum_j\beta_jl_j\right)L_i\;,
    \end{equation}
    where we used the fact that $\sum_iL_i=I$. Because $\{L_i\}$ is a basis, the expansion coefficients are unique, so $\beta_i=\frac{t-1}{td}\sum_j\beta_jl_j$ for all $i$, that is, the $\beta_i$ are constant, say $\beta_i=\beta$. Then $\sum_i\beta_iL^t_i=\beta\sum_iL^t_i=\beta I=0\implies \beta=0$, but this is a contradiction, so $L^t$ is a measure basis.
\end{proof}
\begin{definition}
    Let $L$ be a measure basis. Any measure basis $L^t$, defined by \eqref{collinearMB}, is \emph{collinear} with $L$. If $t>0$, it is \emph{parallel} to $L$ and if $t<0$, it is \emph{antiparallel} to $L$.
\end{definition}
\noindent Now we can prove that, in fact, all parallel measure bases are Wigner equivalent. Let $SPW(L)$ denote the shifted principal Wigner basis of a measure basis $L$.
\begin{theorem}
  For any measure basis $L$,
   \begin{equation}
       PW(L^t)=
    \begin{cases}
        PW(L), & \text{if}\ t>0 \\
        SPW(L), & \text{if}\ t<0
    \end{cases}\;.
  \end{equation}
\end{theorem}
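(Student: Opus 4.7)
The plan is to compute $\mathcal{S}_{L^t}$ as a superoperator in terms of $\mathcal{S}_L$, diagonalize the resulting expression on a natural invariant decomposition of $\mathcal{L}(\mathcal{H}_d)$, extract the inverse square root, and then apply it to $L^t_i$.

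First I would evaluate $\mathcal{S}_{L^t}(X)$ directly from Definition \ref{rescaledS} by plugging in $L^t_i = tL_i + (1-t)\frac{l_i}{d}I$ and $l^t_i = l_i$, and use $\sum_j L_j = I$ together with $\mathcal{S}_L(I) = I$ to collect terms. The anticipated outcome is the clean identity
\begin{equation}
    \mathcal{S}_{L^t}(X) = t^2\,\mathcal{S}_L(X) + (1-t^2)\,\mathcal{T}(X),
    \qquad \mathcal{T}(X) := \frac{\tr X}{d}\,I,
\end{equation}
so that $\mathcal{S}_{L^t}$ is an affine-in-$t^2$ interpolation between $\mathcal{S}_L$ and the rank-one idempotent $\mathcal{T}$.

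Next I would exploit that $I$ is a $+1$ eigenvector of both $\mathcal{S}_L$ and $\mathcal{T}$, so by self-adjointness (with respect to Hilbert--Schmidt) the orthogonal decomposition $\mathcal{L}(\mathcal{H}_d) = \mathbb{C}I \oplus I^\perp$ is invariant under $\mathcal{S}_L$, $\mathcal{T}$, and hence $\mathcal{S}_{L^t}$. On $\mathbb{C}I$ we have $\mathcal{S}_{L^t} = \mathrm{id}$; on the traceless subspace $I^\perp$, the term $\mathcal{T}$ vanishes and $\mathcal{S}_{L^t} = t^2 \mathcal{S}_L$. The principal (positive) square root is then $\mathcal{S}_{L^t}^{-1/2} = \mathrm{id}$ on $\mathbb{C}I$ and $\mathcal{S}_{L^t}^{-1/2} = |t|^{-1}\mathcal{S}_L^{-1/2}$ on $I^\perp$. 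This passage from $t^2$ to $|t|$ is the one step to be careful about, and it is exactly where the two cases of the theorem come from: the principal square root cannot see the sign of $t$, but reintroducing the outer factor of $t$ from $L^t_i$ produces $\mathrm{sign}(t)$.

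Finally I would decompose $L^t_i = t(L_i - \tfrac{l_i}{d}I) + \tfrac{l_i}{d}I$ into its $I^\perp$ and $\mathbb{C}I$ parts, apply $\mathcal{S}_{L^t}^{-1/2}$ componentwise, and use $\mathcal{S}_L^{-1/2}(L_i) = F_i$ (Definition \ref{pwb}) together with $\mathcal{S}_L^{-1/2}(I) = I$ to obtain
\begin{equation}
    F^t_i = \mathrm{sign}(t)\bigl(F_i - \tfrac{l_i}{d}I\bigr) + \tfrac{l_i}{d}I.
\end{equation}
For $t>0$ this collapses to $F^t_i = F_i$, giving $PW(L^t) = PW(L)$; for $t<0$ it becomes $F^t_i = -F_i + \tfrac{2l_i}{d}I = F^{\rm S}_i$, giving $PW(L^t) = SPW(L)$. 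The main obstacle is purely the sign bookkeeping around $\sqrt{t^2} = |t|$; everything else is an algebraic consequence of the superoperator identity in the first step and the invariant splitting in the second.
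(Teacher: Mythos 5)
Your proof is correct, and it reaches the result by a route that is dual to, but cleaner in places than, the paper's. The paper works entirely at the level of $d^2\times d^2$ matrices: it shows $\Phi^t=\frac{1}{t^2}\Phi+\left(1-\frac{1}{t^2}\right)\frac{1}{d}AJ$, establishes the auxiliary identity $\sqrt{\Phi}AJ=AJ\sqrt{\Phi}=AJ$ from the quasistochasticity of $\Phi$ (all-ones left eigenvector, weight-vector right eigenvector), verifies by squaring that $\sqrt{\Phi^t}=\frac{1}{|t|}\sqrt{\Phi}+\left(1-\frac{1}{|t|}\right)\frac{1}{d}AJ$, and then invokes Proposition \ref{MICbasis} to expand $F^t_i=\sum_j[\sqrt{\Phi^t}]_{ij}L^t_j$. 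You instead work with the superoperator $\mathcal{S}_{L^t}=t^2\mathcal{S}_L+(1-t^2)\mathcal{T}$ (which does come out exactly as you anticipate: the cross terms contribute $2t(1-t)$ and the pure depolarizing term $(1-t)^2$, summing to $1-t^2$, using $\sum_j l_j=d$), and the orthogonal splitting $\mathcal{L}(\mathcal{H}_d)=\mathbb{C}I\oplus I^\perp$ does for you what the $AJ$ projector algebra does for the paper --- your $\mathcal{T}$ is the superoperator avatar of $\frac{1}{d}AJ$, and your block-diagonal extraction of $\mathcal{S}_{L^t}^{-1/2}$ replaces the paper's verification of the $\sqrt{\Phi^t}$ formula. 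What your approach buys is transparency at the crucial $\sqrt{t^2}=|t|$ step, since the principal square root is computed independently on each invariant block where $\mathcal{S}_L|_{I^\perp}$ is positive definite; what it costs is nothing, though you do silently use that functions of the self-adjoint $\mathcal{S}_{L^t}$ preserve its invariant subspaces, and that $\mathcal{S}_L^{-1/2}(I)=I$ so the traceless part of $L_i$ maps to the traceless part of $F_i$ --- both true and worth one line each in a final write-up. Both proofs, like the paper's, implicitly assume all weights $l_i$ are strictly positive so that the rescaled frame operator is defined.
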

\begin{proof}
 Because the relation between $L$ and $L^t$ is relatively simple, their Born matrices $\Phi=AG^{-1}$ also end up simply related:
    \begin{equation}
        \Phi^t=\frac{1}{t^2}\Phi+\left(1-\frac{1}{t^2}\right)\frac{1}{d}AJ\;.
    \end{equation}
    In particular note that $\Phi^t$ for two $t$ values with the same magnitude but opposite signs are equal. $\Phi$ is column quasistochastic, so the left eigenvector with eigenvalue $1$ is the all 1s vector; it is easy to show that the corresponding right eigenvector is the vector of weights. $\sqrt{\Phi}$ shares these properties and because of them one may deduce that
    \begin{equation}
        \sqrt{\Phi}AJ=AJ\sqrt{\Phi}=AJ\;.
    \end{equation}
    From this, it is easy to verify that
    \begin{equation}
        \sqrt{\Phi^t}=\frac{1}{|t|}\sqrt{\Phi}+\left(1-\frac{1}{|t|}\right)\frac{1}{d}AJ\;,
    \end{equation}
    where $|t|$ is the absolute value of $t$. Now we may use Proposition \ref{MICbasis} to compute the principal Wigner basis for $L^t$:
    \begin{equation}
        \begin{split}
            F_i^t&=\sum_j[\sqrt{\Phi^t}]_{ij}L^t_j=\sum_j\left[\frac{1}{|t|}\sqrt{\Phi}+\left(1-\frac{1}{|t|}\right)\frac{1}{d}AJ\right]_{ij}\left(tL_j+(1-t)\frac{l_j}{d}I\right)\\
            &=\frac{t}{|t|}\sum_j[\sqrt{\Phi}]_{ij}L_j+\frac{1-t}{|t|d}\sum_j[\sqrt{\Phi}]_{ij}l_jI+\frac{t}{d}\left(1-\frac{1}{|t|}\right)\sum_j[AJ]_{ij}L_j+\left(1-\frac{1}{|t|}\right)\frac{1-t}{d^2}\sum_j[AJ]_{ij}l_jI\\
            &=\frac{t}{|t|}F_i+\left[\frac{1-t}{|t|}+t\left(1-\frac{1}{|t|}\right)+(1-t)\left(1-\frac{1}{|t|}\right)\right]\frac{l_i}{d}I=\frac{t}{|t|}F_i+\left(1-\frac{t}{|t|}\right)\frac{l_i}{d}I\;,
        \end{split}
    \end{equation}
from which the claim follows.
\end{proof}

\noindent Since all parallel measure bases are Wigner equivalent, all parallel MICs are as well. Conveniently, we have also learned that the shifted principal Wigner basis is the principal Wigner basis of the antiparallel MICs. As the next lemma shows, there is always an interval of MICs collinear with any measure basis. This interval corresponds to the intersection of collinear measure bases with the cone of positive semidefinite operators.
\begin{lemma}\label{MICshift}
    Let $L$ be a measure basis. Define $\sigma_i:=L_i/l_i$. Let $\lambda^{\max}_i$ and $\lambda^{\min}_i$ be the maximum and minimum eigenvalues of $\sigma_i$. Then  $L^t$ is a MIC iff $t\neq 0$ and
    \begin{equation}\label{micrange}
        \max_i\left\{\frac{1}{1-d\lambda^{\max}_i}\right\}\leq t\leq\min_i\left\{\frac{1}{1-d\lambda^{\min}_i}\right\}\;.
    \end{equation}
\end{lemma}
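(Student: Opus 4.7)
The plan is to recognize that an already-established proposition guarantees $L^t$ is a measure basis for every $t\neq 0$, so the entire content of the lemma is a positivity check: $L^t$ is a MIC precisely when each $L^t_i$ is positive semidefinite. The arithmetic simplifies dramatically after factoring out $l_i$, using that $l_i>0$ whenever $L^t$ has any chance of being a MIC (a PSD operator with trace $0$ is the zero operator, which cannot sit in a basis). Concretely, I would rewrite
\begin{equation}
L^t_i = l_i\!\left( t\sigma_i + \frac{1-t}{d}I\right),
\end{equation}
so that positivity of $L^t_i$ is equivalent to positivity of $t\sigma_i+\frac{1-t}{d}I$.

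Next I would diagonalize each $\sigma_i$. Since $\sigma_i$ is Hermitian with trace $1$, its eigenvalues $\lambda$ satisfy $\lambda_i^{\min}\leq \tfrac{1}{d}\leq \lambda_i^{\max}$ (with equality on both ends only in the degenerate case $\sigma_i=I/d$, where no constraint on $t$ is imposed). The operator $t\sigma_i+\tfrac{1-t}{d}I$ is PSD iff the scalar inequality
\begin{equation}
t\!\left(\lambda-\tfrac{1}{d}\right)+\tfrac{1}{d}\ \geq\ 0
\end{equation}
holds for every eigenvalue $\lambda$ of $\sigma_i$. Since the LHS is affine and monotone in $\lambda$ with slope $t$, the binding eigenvalues are exactly $\lambda_i^{\min}$ and $\lambda_i^{\max}$, and they impose bounds with opposite signs on $t$.

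Then I would execute the straightforward case split. For $\lambda<1/d$, dividing by the negative quantity $\lambda-\tfrac{1}{d}$ flips the inequality and yields the upper bound $t\leq \frac{1}{1-d\lambda}$, with the tightest such constraint coming from $\lambda=\lambda_i^{\min}$ (where $1-d\lambda_i^{\min}>0$, so the bound is positive). For $\lambda>1/d$, the same division produces a lower bound $t\geq \frac{1}{1-d\lambda}$ with $1-d\lambda<0$, and the tightest is achieved at $\lambda=\lambda_i^{\max}$. Taking the intersection over all indices $i$ gives precisely
\begin{equation}
\max_i\!\left\{\tfrac{1}{1-d\lambda_i^{\max}}\right\}\leq t\leq \min_i\!\left\{\tfrac{1}{1-d\lambda_i^{\min}}\right\},
\end{equation}
and the requirement $t\neq 0$ is inherited from the earlier proposition establishing the measure-basis structure.

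The main obstacle is bookkeeping rather than anything substantive: I have to keep straight which of $\lambda_i^{\min}$ and $\lambda_i^{\max}$ produces the upper versus the lower bound, and make sure the sign flips induced by dividing the inequality by $\lambda-1/d$ are tracked correctly. I would briefly note the degenerate subcases (where some $\sigma_i=I/d$, giving vacuous constraints, and verify that these are absorbed consistently into the $\min$/$\max$ via the convention that $1/(1-d\lambda_i^{\min})=+\infty$ and $1/(1-d\lambda_i^{\max})=-\infty$ when $\sigma_i=I/d$), so that the stated bounds are well-defined uniformly.
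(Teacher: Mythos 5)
Your proof is correct and follows essentially the same route as the paper's: reduce the MIC condition to positive semidefiniteness of each $L^t_i$, observe that the binding eigenvalues are the extremes, and solve the resulting affine inequalities for $t$ (the paper organizes the case split by the sign of $t$ rather than by which side of $1/d$ each eigenvalue lies, but the computation is identical). Your explicit treatment of the degenerate case $\sigma_i = I/d$ and of the requirement $l_i>0$ is a small point of added care that the paper's proof passes over silently.
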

\begin{proof}
    As $L^t$ is a measure basis, for it to be a MIC the elements must all be positive semidefinite. Thus we need to show $\lambda_{\rm min}(L^t_i)\geq 0$ for all $t$ in the range \eqref{micrange}. Since $\tr L_i=l_i$, the average eigenvalue of $L_i$ is $l_i/d$, and so $\lambda_{\rm min}(L_i)\leq l_i/d$. Suppose $t>0$. Then
    \begin{equation}
        \lambda_j(L_i^t)=t\lambda_j(L_i)+(1-t)\frac{l_i}{d}\;,
    \end{equation}
    so 
    \begin{equation}
        \lambda_{\rm min}(L_i^t)=t\lambda_{\rm min}(L_i)+(1-t)\frac{l_i}{d}\geq 0\iff t\leq \frac{1}{1-d\lambda_i^{\rm min}}\;.
    \end{equation}
    For this to be true for all $i$, $t$ must be less than the minimum value of the right hand side. The lower bound is similarly obtained by assuming $t<0$ and making the appropriate adjustments. 
\end{proof}
Starting with a particular MIC, we were able to see that all those parallel to it were Wigner equivalent. But there are Wigner equivalent MICs which are not collinear; in fact, the following theorem opens the way for the construction of arbitrarily many Wigner equivalent MICs which are not collinear. 
\begin{theorem}
    Let $F$ be a Wigner basis. For any measure basis $L$, $\mathcal{S}_L^{1/2}(F)$ is a measure basis collinear with MICs in the Wigner equivance class of $F$. 
\end{theorem}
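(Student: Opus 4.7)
The plan is to set $M_i := \mathcal{S}_L^{1/2}(F_i)$ and proceed in three stages: verify that $M$ is a measure basis; compute its principal Wigner basis to confirm $M\sim_{\rm W} F$; and then invoke Lemma~\ref{MICshift} to produce a one-parameter family of MICs collinear with $M$.

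The first stage is essentially bookkeeping. Since $\mathcal{S}_L$ fixes the identity (as noted after Definition~\ref{rescaledS}), so does its principal square root $\mathcal{S}_L^{1/2}$, giving $\sum_i M_i = \mathcal{S}_L^{1/2}(I) = I$. That same fixed point, together with self-adjointness of $\mathcal{S}_L^{1/2}$ with respect to the Hilbert--Schmidt inner product, forces trace preservation, so $m_i = \tr F_i = f_i \geq 0$; Hermiticity and linear independence follow because $\mathcal{S}_L^{1/2}$ is a self-adjoint invertible superoperator that maps Hermitian operators to Hermitian operators.

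Stage two is the main obstacle, and amounts to showing $\mathcal{S}_M = \mathcal{S}_L$. Using $m_j=f_j$ and pushing one factor of $\mathcal{S}_L^{1/2}$ through the trace by self-adjointness, the computation telescopes to
\begin{equation*}
\mathcal{S}_M(X) = \sum_j \frac{\tr\bigl(X\,\mathcal{S}_L^{1/2}(F_j)\bigr)}{f_j}\,\mathcal{S}_L^{1/2}(F_j) = \mathcal{S}_L^{1/2}\,\mathcal{S}_F\,\mathcal{S}_L^{1/2}(X)\,.
\end{equation*}
Because $F$ is a Wigner basis, $\mathcal{S}_F$ is the identity superoperator, so $\mathcal{S}_M = \mathcal{S}_L$, and therefore $PW(M)_i = \mathcal{S}_M^{-1/2}(M_i) = \mathcal{S}_L^{-1/2}\mathcal{S}_L^{1/2}(F_i) = F_i$, placing $M$ in the Wigner equivalence class of $F$.

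The final stage applies Lemma~\ref{MICshift} to $M$. The interval of admissible $t$ is nonempty because a basis cannot have every $\sigma_i := M_i/m_i$ equal to $I/d$; for any $\sigma_i \neq I/d$ with unit trace, both $\lambda_i^{\max} > 1/d$ and $\lambda_i^{\min} < 1/d$ strictly, forcing the lower bound in \eqref{micrange} to be negative and the upper bound to be positive. For any $t>0$ in this range, $M^t$ is a MIC parallel to $M$, and the preceding theorem then delivers $PW(M^t) = PW(M) = F$, so each such $M^t$ lies in the Wigner equivalence class of $F$, completing the claim.
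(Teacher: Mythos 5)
Your proof is correct and follows essentially the same route as the paper: the central step in both is the identity $\mathcal{S}_{M}=\mathcal{S}_L^{1/2}\,\mathcal{S}_F\,\mathcal{S}_L^{1/2}=\mathcal{S}_L$ (using that $\mathcal{S}_F$ is the identity for a Wigner basis), which gives $PW(M)=F$, followed by an appeal to Lemma~\ref{MICshift}. Your added checks --- that $M$ really is a measure basis and that the interval in \eqref{micrange} contains some $t>0$ --- are details the paper leaves implicit, and they are handled correctly.
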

\begin{proof}
    Note $\tr \mathcal{S}_L^{1/2}(F_i)=\tr \mathcal{S}_L^{1/2}(F_i)I=\tr F_i\mathcal{S}_L^{1/2}(I)=\tr F_i=f_i$. Now from Definition \ref{rescaledS} and the proportionality of a Wigner basis to its dual it follows that
    \begin{equation}
        \mathcal{S}_{\mathcal{S}^{1/2}_L(F)}(X)=\sum_j\left(\frac{\tr\mathcal{S}^{1/2}_L(F_j)X}{\tr\mathcal{S}^{1/2}_L(F_j)}\right)\mathcal{S}^{1/2}_L(F_j)=\mathcal{S}^{1/2}_L\left(\sum_j\left(\frac{\tr F_j\mathcal{S}^{1/2}_L(X)}{f_j}\right)F_j\right)=\mathcal{S}_L(X)\;.
    \end{equation}
    From this we see that $PW(\mathcal{S}^{1/2}_L(F))=F$. Then from Lemma \ref{MICshift} we may construct MICs collinear with $\mathcal{S}^{1/2}_L(F)$ in the Wigner equivalence class of $F$.
\end{proof}

We have likely only just begun to scratch the surface of this topic, but we have seen enough to begin applying what we've learned. Before doing so in the next section we remark on the significance of our observations. Wigner equivalence is a very broad grouping. Every MIC furnishes a distinct probabilistic representation of the Born Rule, but the probabilistic representations obtained from any two Wigner equivalent MICs are related to the same quasiprobabilistic
representation in the way we have seen. In other words, a Wigner function representation of quantum theory is insensitive to differences among the informationally complete measurements consistent with it. This is at least in part an artifact of our level of treatment. For our purposes here, any MIC sufficed to furnish a probabilistic representation, but this is not to say that any MIC would be practically useful in that capacity. Wigner equivalence alone preserves bias,
but, it seems, little else. A good reference measurement should unmask features of quantum state space that were otherwise hard to detect. It seems likely that the reference measurements standing a chance of being useful in this regard would possess mathematical properties beyond the basic definition of a MIC, perhaps being rank-1 or symmetric with respect to a particular group. Indeed, as the phase point operators of most Wigner function approaches are covariant with respect to the
Weyl--Heisenberg group, furnishing discrete analogs of position and momentum operators, we speculate that Weyl--Heisenberg covariant MICs are a class of reference
measurements which provide something reminiscent of a phase space without sacrificing operational significance. We hope the general association we've identified will allow for a dramatic sharpening of the correspondence between
probabilistic and quasiprobabilistic representations once refinements of this variety and others are adopted. 
\section{Behind Every Great Wigner Basis is a Great MIC}
\label{sec:examples}

The previous section introduced the principal Wigner basis and the induced relation of Wigner equivalence. We can carry this study further by specializing to \emph{unbiased} Wigner bases, the case of most practical and theoretical interest as, to our knowledge, it encompasses all of the Wigner functions derived from operator bases to date in the literature. Pursuing this line of inquiry will lead us to a new way in which SICs are extremal among MICs. We begin by showing that the principal Wigner basis of an unbiased MIC
inherits the closeness property enjoyed by the canonical tight frame. Because of the sum normalization condition, there is also a \emph{farthest} Wigner basis from a given unbiased MIC, namely the shifted principal Wigner basis.

\begin{theorem}\label{distanceToQreps}
    Let $E$ be an unbiased MIC and $F$ be an unbiased
    Wigner basis. Let $\lambda_k$ be the $k$th eigenvalue of the
    MIC's frame operator $S$. Then
    \begin{equation}\label{micqrepbounds}
      \sum_k\left(\sqrt{\lambda_k}-\sqrt{1/d}\right)^2
      \leq\sum_i|\!|E_i-F_i|\!|^2\leq\sum_k\left(\sqrt{\lambda_k}
      +\sqrt{1/d}\right)^2-\frac{4}{d}\;,
    \end{equation}
    where the lower bound is saturated iff $F=PW(E)$ and the upper bound is saturated iff
    $F=SPW(E)$. 
\end{theorem}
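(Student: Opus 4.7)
The plan is to expand the squared distance, compute everything fixed by the hypotheses, and reduce the remaining freedom to a classical trace inequality for orthogonal matrices under a single linear constraint. Writing
\[
\sum_i \|E_i - F_i\|^2 = \sum_i \tr E_i^2 + \sum_i \tr F_i^2 - 2\sum_i \tr(E_i F_i),
\]
the first sum is $\tr G_E = \tr S = \sum_k \lambda_k$ (Gram matrix and frame operator are isospectral on a basis), while Proposition~\ref{wfgram} with bias $1/d$ gives $\tr F_i^2 = 1/d$ for each of the $d^2$ elements, so the second sum equals $d$. All of the optimization therefore concentrates in the cross term $\sum_i \tr(E_i F_i)$.

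Next I would parametrize unbiased Wigner bases against the canonical tight frame of $E$. Set $W_i := S^{-1/2}(E_i)$, which equals $\sqrt{d}\,PW(E)_i$ and forms an orthonormal basis of $\mathcal{L}(\mathcal{H}_d)$. Since $\{\sqrt{d}\,F_i\}$ is likewise an orthonormal basis of Hermitian operators, there is a unique real orthogonal $d^2\times d^2$ matrix $O$ with $\sqrt{d}\,F_i = \sum_k O_{ik}W_k$. Using $S(I) = I/d$, hence $S^{-1/2}(I)=\sqrt{d}\,I$, a short check yields $\tr W_k = 1/\sqrt{d}$ for every $k$. The bias condition $\tr F_i = 1/d$ then becomes exactly $O\mathbf{1}=\mathbf{1}$, and the orthogonality of $O$ automatically promotes this to $O^{T}\mathbf{1}=\mathbf{1}$, thereby forcing $\sum_i F_i = I$. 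Conversely, every such $O$ gives a valid unbiased Wigner basis, so the parametrization is a bijection.

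Now I would evaluate the cross term. Expanding $E_i = S^{1/2}(W_i) = \sum_k \Sigma_{ki}W_k$, where $\Sigma$ is the matrix of $S^{1/2}$ in the $W$-basis (symmetric and positive definite, with spectrum $\{\sqrt{\lambda_k}\}$), a direct calculation gives $\sum_i \tr(E_i F_i) = (1/\sqrt{d})\,\tr(\Sigma O^T)$. The relation $S^{1/2}(I)=(1/\sqrt{d})I$ translates to $\Sigma\mathbf{1}=(1/\sqrt{d})\mathbf{1}$, so both $\Sigma$ and $O$ preserve the span of $\mathbf{1}$ and its orthogonal complement. The $\mathbf{1}$-block contributes a fixed $1/\sqrt{d}$ to $\tr(\Sigma O^T)$, while on $\mathbf{1}^\perp$ the classical inequality $-\tr A \le \tr(AR) \le \tr A$ for symmetric positive-definite $A$ and orthogonal $R$ bounds the remaining contribution by $\pm\bigl(\sum_k\sqrt{\lambda_k}-1/\sqrt{d}\bigr)$. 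Substituting back into $\sum_i \|E_i-F_i\|^2 = \sum_k\lambda_k + d - (2/\sqrt{d})\tr(\Sigma O^T)$ and simplifying reproduces the two sides of \eqref{micqrepbounds}.

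For the saturation statements, the strict positivity of the eigenvalues of $\Sigma$ on $\mathbf{1}^\perp$ turns the classical inequality into strict inequality unless the restricted $O$ equals $\pm I$. The upper end of the trace is achieved at $O=I$, giving $\sqrt{d}\,F_i = W_i$ and hence $F = PW(E)$ (lower bound in \eqref{micqrepbounds}); the lower end at $O = 2\mathbf{1}\mathbf{1}^{T}/d^{2} - I$, whose action on the canonical tight frame produces $F_i = -PW(E)_i + (2/d^2)I$, precisely $SPW(E)_i$ at bias $1/d$ (upper bound). The main technical obstacle I anticipate is purely bookkeeping: tracking the various $\sqrt{d}$ scalings that relate $E$, $W$, $PW(E)$, and $F$, and checking that the asymmetric $-4/d$ in the upper bound emerges exactly from the isolated, non-optimizable contribution of the pinned $\mathbf{1}$ direction to $\tr(\Sigma O^T)$.
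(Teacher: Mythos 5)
Your proof is correct, and it reaches the result by a genuinely more self-contained route than the paper's. The paper isolates the same cross term $\sum_i\tr E_iF_i$, but it obtains the lower bound by invoking the frame-theoretic fact that the canonical tight frame is the closest tight frame (Theorem 3.2 of Waldron), the key step being the inequality $\sum_i\tr E_iF_i\le\frac{1}{\sqrt d}\sum_k\sqrt{\lambda_k}$ with equality iff $F=PW(E)$; the upper bound is then deduced by substituting $F_i=-F^{\rm S}_i+\frac{2}{d^2}I$ and applying that same inequality to the shifted basis. You instead parametrize all unbiased Wigner bases by orthogonal matrices $O$ satisfying $O\mathbf 1=\mathbf 1$, reduce the cross term to $\frac{1}{\sqrt d}\tr(\Sigma O^T)$ with $\Sigma$ the (positive definite, spectrum $\{\sqrt{\lambda_k}\}$) matrix of $S^{1/2}$ in the canonical-tight-frame basis, and extract both extremes at once from the elementary bound $-\tr A\le\tr(AR)\le\tr A$ restricted to $\mathbf 1^\perp$. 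This buys a unified derivation of both bounds and both saturation conditions without the external frame theorem or the shifted-basis trick, and it makes transparent where the asymmetric $-4/d$ comes from (the pinned $\mathbf 1$ direction contributes a fixed $1/\sqrt d$ that cannot be reversed); the paper's route is shorter on the page and foregrounds the conceptual link to canonical tight frames. Your intermediate identities all check out: for an unbiased MIC $\mathcal S_E=dS$, so $W_i=S^{-1/2}(E_i)=\sqrt d\,PW(E)_i$; $S(I)=I/d$ gives $\tr W_k=1/\sqrt d$ and $\Sigma\mathbf 1=\mathbf 1/\sqrt d$; and the extremal $O=2\mathbf 1\mathbf 1^T/d^2-I$ yields $F_i=\frac{2}{d^2}I-PW(E)_i=SPW(E)_i$ as claimed.
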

\begin{proof}
    The lower bound follows from Theorem 3.2 in \cite{Waldron:2018}. The key step is the demonstration that for any MIC $E$ and Wigner basis $F$,
    \begin{equation}
      \sum_j\tr E_jF_j
      \leq \frac{1}{\sqrt{d}}\sum_k\sqrt{\lambda_k}\;,
    \end{equation}
with equality iff 
\begin{equation}
F_j=\frac{1}{\sqrt{d}}S^{-1/2}(E_j)\;,
\end{equation}
that is, $F=PW(E)$. For the upper bound,
\begin{equation}
        \begin{split}
            \sum_i|\!|E_i-F_i|\!|^2&=\sum_i\biggr|\!\biggr|E_i+F^{\rm S}_i-\frac{2}{d^2}I\biggr|\!\biggr|^2\\
            &=\sum_i\tr\!\!\left(E_i+F^{\rm S}_i-\frac{2}{d^2}I\right)\!\!\left(E_i+F^{\rm S}_i-\frac{2}{d^2}I\right)\\
            &=\sum_i\left(\tr E_i^2+\tr {F_i^{\rm S}}^2+2\tr E_i{F_i^{\rm S}}^2 -\frac{4}{d^3}\right)\\
            &\leq\sum_k \left(\lambda_k+\frac{1}{d}+\frac{2}{\sqrt{d}}\sqrt{\lambda_k}\right)-\frac{4}{d}\\
            &=\sum_k\left(\sqrt{\lambda_k}+\sqrt{1/d}\right)^2-\frac{4}{d}\;,
        \end{split}
    \end{equation}
    with equality iff $F^{\rm S}=PW(E)$, that is, iff $F=SPW(E)$.
   \end{proof}
\noindent Theorem \ref{distanceToQreps} does not seem to exactly generalize to the biased case; perhaps a different condition is more appropriate for biased measure bases. However, this theorem inspires us to ask the reverse question: What is the closest MIC to a given Wigner basis? This is probably quite hard to answer in general, but it turns out we can answer it in an important special case to which we now turn.

Let $E$ be a SIC, that is,
\begin{equation}
  E_i := \frac{1}{d}\Pi_i, \hbox{ with }
  \tr \Pi_i \Pi_j = \frac{d\delta_{ij} + 1}{d+1}\;.
\end{equation}
One may calculate
\begin{equation}\label{SICQreps}
  [\sqrt{\Phi}_{\rm SIC}]_{ij}=\frac{1}{\sqrt{d}}[G_{\rm SIC}^{-1/2}]_{ij}
   = \sqrt{d+1}\delta_{ij} + \frac{1}{d^2}(1-\sqrt{d+1})\;,
\end{equation}
which gives the following principal and shifted principal Wigner basis:
\begin{equation}\label{sicframes}
    F_j=
    \frac{1}{d}\left(\sqrt{d+1}\right)\Pi_j
     + \frac{1}{d^2}\left(1-\sqrt{d+1}\right)
     {I} \quad \text{and } F^\text{S}_j
     = -\frac{1}{d}\left(\sqrt{d+1}\right)\Pi_j
     + \frac{1}{d^2}\left(1+\sqrt{d+1}\right) {I}\;.
\end{equation}
As the next theorem demonstrates, the smallest \textit{and} largest
that the bounds in Theorem \ref{distanceToQreps} can be both occur when the unbiased MIC is a SIC. The
minimal lower bound result complements the prior discovery that SICs
are the closest MICs can come to being orthogonal bases~\cite{Appleby:2014b}. Consequently,
the next theorem supports the intuition that SICs are the natural
analogues of orthogonal operator bases contained within
the cone of positive semidefinite operators.
\begin{theorem}\label{sics-are-best-again}
    Let $\{E_i\}$ be an unbiased MIC and $\{F_j\}$ be an unbiased Wigner basis. Then
    \begin{equation}\label{sicqrepbounds}
        \frac{d-1}{d}\left(d+2-2\sqrt{d+1}\right)\leq\sum_i|\!|E_i-F_i|\!|^2\leq\frac{d-1}{d}\left(d+2+2\sqrt{d+1}\right)\;,
    \end{equation}
    where the lower bound is saturated iff $\{E_i\}$ is a SIC and $\{F_j\}$ is its principal Wigner basis and the upper bound is saturated iff $\{E_i\}$ is a SIC and $\{F_j\}$ is its shifted principal Wigner basis.
\end{theorem}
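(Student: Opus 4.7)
The plan is to apply Theorem \ref{distanceToQreps} and reduce the question to a spectral optimization over the frame operator of the MIC. For any unbiased MIC $E$, $\tr E_i = 1/d$ gives $S(I) = (1/d)I$, so $1/d$ is always an eigenvalue of $S$; label the other $d^2-1$ eigenvalues $\lambda_2,\ldots,\lambda_{d^2}$ and set $\sigma := \tr S - 1/d = \sum_{k\geq 2}\lambda_k$. The $\lambda = 1/d$ eigenvalue contributes $0$ to the lower bound in \eqref{micqrepbounds} and exactly $4/d$ to the upper bound, precisely cancelling the $-4/d$ correction. What remains is to lower-bound (respectively upper-bound)
\[
\sigma \;\mp\; \frac{2}{\sqrt{d}}\sum_{k\geq 2}\sqrt{\lambda_k} \;+\; \frac{d^2-1}{d}
\]
over all spectra realizable by unbiased MICs.

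Two ingredients handle the two degrees of freedom. First, since $E_i \succeq 0$ with $\tr E_i = 1/d$, the spectral bound $\tr E_i^2 \leq (\tr E_i)^2 = 1/d^2$ holds with equality iff $E_i$ is rank one, so $\tr S = \sum_i \tr E_i^2 \leq 1$ and $\sigma \leq (d-1)/d$, with equality exactly when the MIC is rank-$1$. Second, Cauchy--Schwarz gives $\sum_{k\geq 2}\sqrt{\lambda_k} \leq \sqrt{(d^2-1)\sigma}$, with equality iff $\lambda_2 = \cdots = \lambda_{d^2}$. Substituting reduces each optimization to a one-variable problem for $g_\pm(\sigma) := \sigma \pm (2/\sqrt{d})\sqrt{(d^2-1)\sigma} + (d^2-1)/d$ on $\sigma \in [0,(d-1)/d]$. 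A quick derivative check (using $\sqrt{d+1}>1$ at the right endpoint) shows $g_-$ is strictly decreasing and $g_+$ strictly increasing on this interval, so both extrema are attained at $\sigma = (d-1)/d$, where they simplify to $(d-1)(d+2 \mp 2\sqrt{d+1})/d$, matching \eqref{sicqrepbounds}.

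The ``iff'' clause is a rigidity statement, and it is the main obstacle. Equality throughout the chain requires simultaneously: the Wigner-basis equality clauses from Theorem \ref{distanceToQreps}, namely $F = PW(E)$ for the lower bound and $F = SPW(E)$ for the upper; Cauchy--Schwarz saturation, i.e., $\lambda_2 = \cdots = \lambda_{d^2} = 1/(d(d+1))$; and $\sigma = (d-1)/d$, forcing $E_i = (1/d)\Pi_i$ with $\Pi_i$ a rank-$1$ projector. What is left is to show these conditions force $E$ to be a SIC. The Gram matrix $G_{ij} = \tr E_i E_j$ is isospectral with $S$ and has the all-ones vector as a right eigenvector with eigenvalue $1/d$ (because $\sum_j G_{ij} = \tr E_i = 1/d$); together with the spectrum above, these data pin $G$ down to the two-parameter form $\alpha I_{d^2} + \beta J$ with $\alpha = 1/(d(d+1))$ and $\alpha + d^2\beta = 1/d$, and reading off $\tr \Pi_i\Pi_j = d^2 G_{ij} = (d\delta_{ij}+1)/(d+1)$ is exactly the SIC condition, closing the theorem.
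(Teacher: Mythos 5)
Your proof is correct and takes essentially the same route as the paper's: apply Theorem \ref{distanceToQreps}, reduce to an optimization over the spectrum of the frame operator using the two constraints that $1/d$ is an eigenvalue of $S$ and $\tr S\leq 1$, locate the extremum at the spectrum $\left(\tfrac{1}{d},\tfrac{1}{d(d+1)},\ldots,\tfrac{1}{d(d+1)}\right)$, and identify that spectrum as characterizing SICs. The only difference is that you explicitly supply two steps the paper leaves terse --- the constrained optimization (done via Cauchy--Schwarz plus a one-variable monotonicity check) and the spectrum-implies-SIC rigidity (done via the $\alpha I+\beta J$ form of the Gram matrix, where the paper instead cites Lemma 1 of an earlier work) --- and both of your arguments are sound.
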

\begin{proof}
    The matrix $dG$ is doubly stochastic for an unbiased MIC, so the maximal eigenvalue of $S$ is always $1/d$. Furthermore, because the diagonal entries of an unbiased MIC's Gram matrix are bounded above by $1/d^2$, we also know that $\tr S\leq 1$. It is then straightforward to perform a constrained optimization to see that the bounds in \eqref{micqrepbounds} achieve their extreme values when
    \begin{equation}
        \lambda=\left(\frac{1}{d},\frac{1}{d(d+1)},\ldots,\frac{1}{d(d+1)}\right)\;.
    \end{equation}
    Plugging this spectrum in to \eqref{micqrepbounds} gives the upper and lower bounds in \eqref{sicqrepbounds}. Such a spectrum occurs iff the MIC is a SIC, a fact that is easy to derive from Lemma 1 in \cite{DeBrota:2018}.
\end{proof}
The Wigner bases \eqref{sicframes} were identified by Zhu~\cite{Zhu:2016a}
for a different reason.\footnote{Zhu calls unbiased Wigner bases
  ``NQPRs'' and prefers to report the dual basis elements. In his
  notation, $Q_j^{-}=dF_j$ and $Q_j^{+}=dF_j^\text{S}$.} Given a
Wigner basis, the \emph{ceiling negativity} of a quantum state
$\rho$ is the magnitude of the most negative entry in the
quasiprobability vector that represents $\rho$. Maximizing the ceiling
negativity over all quantum states yields the ceiling negativity of
the Wigner basis. Zhu proved that the principal and shifted
principal Wigner bases associated with a SIC provide,
respectively, the \emph{lower and upper bounds} on the ceiling
negativity over all unbiased Wigner bases in dimension $d$. Our 
orthogonalization procedure sets Zhu's result in a broader
conceptual context: Zhu's Wigner bases are the output of applying
to a SIC a procedure that works for any MIC. Our quite general definition of a Wigner basis was partly inspired by Zhu's approach. His relaxation of the requirement of a discrete phase space interpretation for his Wigner bases allowed him to propose quasiprobability representations extremizing the computational resource of negativity beyond what would have been possible within a narrower scope. We have similarly aimed to impose very few constraints at the outset to see
to what extent quantum theory, thus unrestrained, might offer replacements for our presuppositions.

SICs are exceptional among MICs, so finding them in a Wigner equivalence class as the closest member to the Wigner basis prompts us to postulate in general that the closest MIC or MICs in an equivalence class to their principal Wigner basis may be a quantum measurement of particular conceptual similarity to the Wigner basis. As we alluded earlier, finding representationally significant refinements to the set of principal Wigner basis preimages stands a chance of enriching our
understanding of both Wigner function representations and informationally complete measurements. With this program in mind, we present some initial observations about a few Wigner bases and some candidate MICs ``behind'' them.

The discrete Wigner functions most familiar from the literature are those introduced by Wootters~\cite{Wootters:1987}. He constructs Wigner bases for prime dimensions and uses tensor products of these to form a Wigner basis for any composite dimension. As we noted in the previous section, the same can be done with Wigner equivalent MICs in the component dimensions to form Wigner equivalent MICs in any composite dimension. For $d=2$, all unbiased Wigner bases are equivalent up to
an overall unitary transformation and permutation; in particular, we may view any of them as the principal Wigner basis for some qubit SIC. Thus, as Zhu notes, reproducing the Wootters--Wigner basis is a matter of choosing the proper SIC --- that is, picking a regular tetrahedron with the
correct orientation in the Bloch sphere~\cite{Zhu:2016a}. Similarly, Wootters' qutrit Wigner basis is exactly the shifted principal Wigner basis for a special SIC in dimension 3, the Hesse SIC~\cite{Zhu:2016a,Stacey:2016c}. More generally, MICs parallel to the remaining necessary Wootters--Wigner bases, those in odd prime dimensions, were first constructed by Appleby~\cite{Appleby:2007}. The Appleby MIC can be constructed in any odd dimension $d$. The elements
$\{E_{k,l}\}$ of this MIC are labeled by ordered pairs of integers
$k,l \in \{0, \ldots, d-1\}$, and each element has rank
$(d+1)/2$. Together, the elements of the Appleby MIC comprise an orbit
under the action of the Weyl--Heisenberg group. Like a SIC, the Appleby MIC is equiangular. In dimension $3$, the Appleby MIC is apparently the MIC antiparallel to the Hesse SIC with the minimal $t$ value in \eqref{micrange}. Nice properties like equiangularity, relatively low rank elements, and the group covariance suggest these MICs may be the most significant MICs Wigner equivalent with Wootters--Wigner bases. 

Another example may be found in a discrete extension of the Cahill--Glauber formalism~\cite{Ruzzi:2005}. The authors there define an orthogonal operator basis in odd dimensions which takes operators to functions on a discrete phase space $\{\mu,\nu\}$. One can form an unbiased Wigner basis from their notation via $F_{\mu,\nu}=\frac{1}{d}\mathbf{T}^{(0)}(\mu,\nu)$, where the unitary operators $\mathbf{U}$ and $\mathbf{V}$ in their definition are the Weyl--Heisenberg ``shift'' and ``phase''
operators, respectively. The parallel, equiangular MIC with largest $t$ value may again be a good choice of associated reference measurement, but this choice is not as compelling as in the Appleby case because the rank of its
effects is $d-1$. Equiangularity may not be worth such a large rank tradeoff. We speculate that there is a better association among the Wigner equivalent MICs to this family of Wigner bases.

A case of particular interest for quantum computation is $N$-qubit systems. The Wootters--Wigner basis for such a system is the $N$-fold tensor product of the qubit Wootters--Wigner basis. Probably the most significant Wigner equivalent MIC to this Wigner basis would be the $N$-fold tensor product of the appropriately oriented qubit SIC. Let $\{E_i: i = 1, \ldots, 4\}$ be a qubit SIC. Up
to the weighting factor, each effect is a rank-1 projector, and the
set of four such projectors can be portrayed as a regular tetrahedron
inscribed in the Bloch sphere~\cite{Renes:2004}.  A \emph{tensorhedron
  MIC}~\cite{DeBrota:2018b} is a POVM whose elements are tensor
products of operators chosen from the qubit SIC $\{E_i\}$:
\begin{equation}
  E_{i_1,\ldots,i_N} = E_{i_1} \otimes \cdots \otimes E_{i_N}.
\end{equation}
Up to an overall unitary conjugation, every qubit SIC is covariant
under the Pauli group, and so every tensorhedron MIC has an $N$-qubit
Pauli symmetry. Perhaps thinking about tensorhedron measurements in $N$-qubit computation scenarios will inspire insights that didn't come from their quasiprobability counterparts.

The study of MICs may also suggest directions of research on the Wigner function side. One of the mysteries of the SICs is that, in all known cases, the
SICs are group covariant.  The \emph{definition} of a SIC does not
mention group covariance anywhere --- the only symmetry in it is the
equality of the inner products --- and so the fact that the known SICs
are all group covariant might be a subtle consequence we do not yet
understand, or it might be an accident of convenience.  We do know,
thanks to Zhu, that in prime dimensions, if a SIC is group covariant
then it must be covariant under the Weyl--Heisenberg group
specifically~\cite{Zhu:2010,Zhu:2012}.  This leaves open the cases of
dimensions that are higher prime powers or products of distinct
primes. And in dimension $d = 8$, there exists in addition to the
Weyl--Heisenberg SICs the class of \emph{Hoggar-type SICs,} which are
related to the octonions and are covariant under the three-qubit Pauli
group~\cite{Hoggar:1981, Hoggar:1998, Szymusiak:2015, Stacey:2016,
  Stacey:2016b}.  All of these SICs can be converted to unbiased
Wigner bases in the manner described above, and the resulting
Wigner bases will inherit the group-covariance properties of the
original SICs. Therefore, for a three-qubit system, the construction of principal Wigner
bases from unbiased MICs furnishes three inequivalent
Wigner bases of interest: Wootters, Weyl--Heisenberg, and Hoggar. The Wootters version is distinguished by particularly nice permutation symmetry properties~\cite{Zhu:2016}.

We conclude this section by noting an example of a measure basis property which may be studied for both MICs and Wigner bases in light of the principal Wigner basis concept. In order to study time evolution, Wootters~\cite{Wootters:1987} explores the \emph{triple products} of his Wigner basis, which in
our notation are
\begin{equation}
\Gamma_{jkl} = d^2 \tr (F_j F_k F_l).
\end{equation}
These can of course be defined for any Wigner basis. Of particular
note is the case where the Wigner basis is the principal Wigner
basis of a SIC, because the SIC triple products $\tr \Pi_j \Pi_k \Pi_l$
are remarkable numbers~\cite{Appleby:2015,Appleby:2017b, Appleby:2017d, Kopp:2018,
  Andersson:2019}.  We have that
\begin{equation}
\begin{split}
d^3 \tr(F_j F_k F_l)
 &= \pm(d+1)^{3/2}\, \tr(\Pi_j \Pi_k \Pi_l) \\
 &\ + (1 - \sqrt{d+1}) 
      (\delta_{jk} + \delta_{kl} + \delta_{jl}) \\
 &\ - \frac{1}{d^2\sqrt{d+1}}
    \left(2\sqrt{d+1} + (d+1)(d-2)\right).
\end{split}
\end{equation}
For Wootters' definition of the discrete Wigner basis, the triple
products can be found using the geometry of the finite affine plane
on~$d^2$ points.  Essentially, one takes the triangle formed by three
points in that phase space, and the triple product depends upon the
``area'' of it.  Specifically, the triple product (for the case of odd
prime dimension) is given by the complex exponential
\begin{equation}
  \Gamma_{jkl} = \frac{1}{d} \exp\left(\frac{4\pi i}{d}
  A_{jkl}\right)\;.
\end{equation}
This leads naturally to an interpretation of the triple products in
terms of \emph{geometric phases.}  The larger the enclosed area, the
greater the geometric phase.  The SIC triple products, and thus by
extension those of their associated Wigner bases, have rich
number- and group-theoretic properties~\cite{Stacey:2016,
  Stacey:2016b, Appleby:2017b, Appleby:2017d, Kopp:2018,
  Andersson:2019}.  What these properties imply for the Wigner
bases derived from SICs is largely an open question. For early
results in this vein, see Theorems 6 and 13 of~\cite{Appleby:2015} and also~\cite{Appleby:2011}.

\section{Discussion}
\label{sec:discussion}

A MIC is a basis for the operator space $\mathcal{L}(\mathcal{H}_d)$,
or in other words, a coordinate system for doing quantum mechanics.
Because MIC elements are required to be positive semidefinite, no MIC
can ever be an \emph{orthogonal} basis; the closest that a MIC can
come to orthogonality is by being a SIC~\cite{Appleby:2014b}.  Prior work has shown that this expresses how
much of the oddity of quantum theory is an artifact of coordinates,
versus what is the unavoidable residuum of nonclassicality~\cite{DeBrota:2018}. Other reference measurements may be optimally suited for other purposes, say, potentially, for algorithm design or to account for experimental specifics. If one
abandons direct operational meaning in terms of probabilities,
one can push basis elements outside of the positive semidefinite cone
and achieve orthogonality.  In this paper, we have shown well-defined
procedures for doing so, and we have quantified how far an
orthogonalized basis --- a Wigner basis --- can deviate from the
original MIC in the unbiased case.

In exploring the consequences of the principal Wigner basis definition, we have found ourselves with a number of thus far unresolved questions. Several pertain to what we believe will be a fruitful direction for further research, namely the disambiguation of Wigner equivalent MICs: When is there a rank-1 MIC in an equivalence class? How is the rank of the MIC related to its distance to the principal Wigner basis? Does the principal Wigner basis suggest anything about the operational
significance of a Weyl--Heisenberg covariant reference measurement? While negative quasiprobabilities do not have direct operational
meaning as probabilities do, they can be made meaningful in
combination with additional data.  Of particular relevance is the
discovery that negativity can be a resource for quantum
computation~\cite{Veitch:2012, Veitch:2014}. With a suitable Wigner equivalent MIC, the analog of negativity may be studied in reference probabilities. In the other direction, perhaps one could explore how useful statistical properties which are easily displayed by probabilistic representations are reflected in the principal Wigner function. In pursuing this inverse problem, one potential place to turn is to resource theory, especially in light of a majorization lemma concerning Born matrices we proved in a
previous paper~\cite{DeBrota:2018}. Grasping the variety of Wigner functions, and how they
relate to the most economical of probabilistic representations of
quantum theory, may prove helpful in advancing our understanding of
this intriguing subject.

\bigskip

We thank Marcus Appleby, Christopher Fuchs, and Huangjun Zhu for discussions.  This
research was supported in part by the John Templeton Foundation. The
opinions expressed in this publication are those of the authors and do
not necessarily reflect the views of the John Templeton
Foundation. JBD was supported by grant FQXi-RFP-1811B of the
Foundational Questions Institute.

\end{document}